\title{Mechanical proving with Walnut for squares and cubes in partial words} 
\titlerunning{Walnut for partial words} 
\author{John Machacek}{Department of Mathematics, University of Oregon, USA \and \url{https://sites.google.com/view/jmachacek/} }{johnmach@uoregon.edu}{}{}
\authorrunning{J. Machacek} 
\keywords{Partial words, squares, antisquares, cubes, Walnut} 
\begin{document}

\maketitle

\begin{abstract}
Walnut is a software that can prove theorems in combinatorics on words about automatic sequences.
We are able to apply this software to both prove new results as well as reprove some old results on avoiding squares and cubes in partial words.
We also define the notion of an antisquare in a partial word and begin the study of binary partial words which contain only a fixed number of distinct squares and antisquares.
\end{abstract}

\section{Introduction}

Our focus is on repetitions in partial words and the use of the software called Walnut\footnote{We have used the version of Walnut available at \url{https://github.com/DistortedLight/Walnut}.}~\cite{Walnut} to give automated proofs in situations where automatic sequences can be used.
Partial words are generalizations of usual words that make use of an additional ``wildcard'' character which matches all other characters.
Walnut has been used to give alternative proofs of previously known results and has also been used to produce new theorems in combinatorics on words (see e.g.,~\cite{fib}).
To our knowledge this work is the first use of Walnut with partial words and contains proofs of both new results as well as previously known ones.
We will define the notions which are most central to our work, but familiarity with some standard terms and ideas from combinatorics on words~\cite{Loth} is assumed.
An introduction to fundamental concepts on partial words can be found in~\cite{FBS}
This paper is the full version of the conference paper presented at CPM 2022~\cite{CPM}.

A \emph{square} or \emph{cube} is a word of the form $xx$ or $xxx$, respectively, for a nonempty word $x$.
For example, $(010)^2 = 010010$ is a square and $(110)^3 = 110110110$ is a cube.
We will consider words that avoid squares as well as those which avoid cubes.
This means words that do not have a factor (i.e., contiguous substring) which is a square or cube respectively.
A \emph{squarefree} word is a word which avoids squares and a \emph{cubefree} word is a word which avoid cubes.
A \emph{morphism} is a map $\psi: \Sigma^* \to \Delta^*$ between words over two alphabets such that $\psi(xy) = \psi(x)\psi(y)$ for all $x,y \in \Sigma^*$.
We will make frequent use of morphisms to find words with a desired property.

A classic problem in combinatorics on words is constructing words avoiding squares, cubes, and other types of repetitions.
Thue~\cite{thue, Berstel} was able to construct an infinite cubefree binary word and an infinite squarefree ternary word each of which can be obtained as the fixed point of a morphism.
We let 
\[{\bf tm} = 01101001100101101001011001101001\cdots\]
denote the \emph{Thue-Morse} word which is the fixed point of the morphism $0 \mapsto 01$ and $1 \mapsto 10$ which begins with $0$.
We also let
\[{\bf vtm} = 012021012102012021020121\cdots\]
denote the fixed point of the morphism $0 \mapsto 012$, $1 \mapsto 02$, and $2 \mapsto 1$ which is sometimes called the \emph{ternary Thue-Morse word} (see e.g.,~\cite{binomial}) or a \emph{variant of the Thue-Morse word} (see~\cite{Fox}).
It is the case that ${\bf tm}$ is cubefree and ${\bf vtm}$ is squarefree.
We will make use of both of these words in constructions later.

A \emph{partial word} is a word which can use a special character $\diamond$ which is called a \emph{hole} or \emph{wildcard}.
For partial words a square or cube is a partial word ${\bf w}$ which is \emph{contained} in a square ${\bf u}=xx$ or a cube ${\bf u}=xxx$ respectively in the sense the ${\bf w}[i] = {\bf u}[i]$ whenever ${\bf w}[i] \neq \diamond$.
In this case we write ${\bf w} \subset {\bf u}$.
If the exists ${\bf w}$ such that ${\bf u} \subset {\bf w}$ and ${\bf v} \subset {\bf w}$ we say that ${\bf u}$ and ${\bf v}$ are \emph{compatible} and write ${\bf u} \uparrow {\bf v}$. 

The \emph{order} of the square or cube is the length of $x$, which we denote by $|x|$ .
For example, $01101\diamond 011$ is a partial word which is a cube of order $3$ since it is contained in $(011)^3 = 011011011$.
It is clear the presence of holes makes it more difficult to avoid squares or cubes.
All words are partial words with no holes.
When we wish to emphasize that a (partial) word has no holes we will refer to the word as a \emph{full word}.

\section{Squares, antisquares, cubes, and first-order logic}
\label{sec:defs}

In this section we define what squares, antisquares, and cubes are in terms of first-order logic.
This allows for seamless use with Walnut and highlights some differences between full words and partial words.
All variables we quantify over are taken from the nonnegative integers unless specified otherwise.

For a full word ${\bf w}$ containing a  \emph{square} means 
\[\exists j \exists (n>0)  \forall i, (i < n) \implies ({\bf w}[j+i] = {\bf w}[j+n+i])\]
while for a partial word it means
\[\exists j  \exists (n > 0)  \forall i, (i < n) \implies \left(({\bf w}[j+i] = {\bf w}[j+n+i]) \vee ({\bf w}[j+i]=\diamond) \vee ({\bf w}[j+n+i]=\diamond)\right)\]
both of which can be expressed in first-order logic.
We see that the expression for partial words contains more clauses.
A value of $n$ for which the above is made true is called the \emph{order} of the square.

A partial word ${\bf w}$ contains an \emph{antisquare} provided
\[\exists j  \exists (n > 0)  \forall i, (i < n) \implies \left(({\bf w}[j+i] \neq {\bf w}[j+n+i]) \wedge ({\bf w}[j+i]\neq\diamond) \wedge ({\bf w}[j+n+i]\neq\diamond)\right)\]
which is consistent with what was considered for binary words in~\cite{anti} and differs from the notion of an \emph{anti-power} studied in~\cite{otheranti}.
We believe this to be a natural definition of an antisquare for a partial word given how it comes from negating the latter half of the implication in the logical expression for a square.
We see that replacing a letter with a hole can create a square, and dually it can remove the presence of an antisquare.
A factor which is an antisquare cannot contain any holes, but since we will consider squares and antisquares together in partial words holes will still play a crucial role.

Lastly, we say a partial word ${\bf w}$  contains a \emph{cube} if
\begin{align*}
\exists j  \exists (n > 0)  \forall i, (i < n) \implies &\Big(\big(({\bf w}[j+i] = {\bf w}[j+n+i]) \wedge ({\bf w}[j+n+i] = {\bf w}[j+2n+i])\big) \\
	&\vee \big(({\bf w}[j+i]=\diamond) \wedge ({\bf w}[j+n+i]={\bf w}[j+2n+i])\big) \\
	&\vee  \big(({\bf w}[j+n+i]=\diamond) \wedge ({\bf w}[j+i]={\bf w}[j+2n+i])\big)\\
	&\vee \big(({\bf w}[j+2n+i]=\diamond) \wedge ({\bf w}[j+i]={\bf w}[j+n+i])\big)\\
	&\vee \big(({\bf w}[j+i]=\diamond) \wedge ({\bf w}[j+n+i]=\diamond)\big) \\
	&\vee \big(({\bf w}[j+i]=\diamond) \wedge ({\bf w}[j+2n+i]=\diamond)\big) \\
	&\vee \big(({\bf w}[j+n+i]=\diamond) \wedge ({\bf w}[j+2n+i]=\diamond)\big)\Big)
\end{align*}
where we find a more drastic difference compared to what would be the first-order logic for the case of full words.
We note that in the logical expression for a cube we would only need 
\[({\bf w}[j+i] = {\bf w}[j+n+i]) \wedge ({\bf w}[j+n+i] = {\bf w}[j+2n+i])\]
in the latter half of the implication for full words.
One could continue to consider higher powers, and the number of additional clauses in the partial word version will continue to grow.
We will restrict our attention to squares and cubes.

\begin{table}
\centering
\begin{tabular}{|c |c |c |c |c| c|}\hline
$\forall$ & $\exists$ & $\wedge$ & $\vee$ & $\neg$ & $\implies$ \\
\texttt{A} & \texttt{E} & \texttt{\&} & \texttt{|} & \texttt{\~} & \texttt{=>}\\ \hline
\end{tabular}
\caption{Logical operators and corresponding symbols in Walnut.}
\label{tbl:logic}
\end{table}

The proofs of many results in this paper are given by short snippets of Walnut code.
We now explain a few aspects of Walnut to make these snippets more readable to a reader that does not have prior experience with this language.
A more detailed explanation can be found in~\cite{Walnut}.
One can see in Table~\ref{tbl:logic} how usual logical symbols are represented in Walnut.
Our alphabet will always be $\{0,1,\dots,N\}$ for some $N$, and we will use $N+1$ to denote the hole $\diamond$.
For example, the binary partial word $0\diamond 1 0$ in Walnut would be \texttt{0210}.
The symbol \texttt{@} is use in denote a character of the alphabet as oppose of an integer which can be the index of a position in a word.
So, \texttt{W[2] = @3} is used to say that the character $3$ is in position $2$ of the word $W$.
Lastly, morphisms can be defined intuitively where \texttt{0->010} encodes $0 \mapsto 010$ the image of $0$.

\section{Results}

\subsection{Avoiding long squares in binary}\label{sec:binary}
In this subsection we look at binary partial words which only contain short squares and antisquares.
It is not possible to completely avoid squares since any binary (full) word of length at least $4$ will contain some square.
Let the morphism $h: \{0,1,2\} \to \{0,1,\diamond\}$ be defined by 
\begin{align*}
h(0) &= 1100\\
h(1) &= 011 \diamond \\
h(2) &= 1010\\
\end{align*}
which is a partial word variant of a morphism from~\cite[Section 2]{3squares} that itself is a variant of a morphism used in~\cite{EJS} to produce a binary word which has no squares of order $3$ or more.
In particular, if the hole in the definition of the morphism is replaced with a $1$, then the image under $h$ of any ternary square-free word will be a binary word where all squares have order less than $3$.
With the aid of Walnut one can quickly experiment, by replacing latter in the morphism with hole, and obtain a partial word which avoids squares of large length.
After the addition of the hole we no longer avoid squares of order $3$, but the resulting partial word will avoid squares of order $4$ or more.

\begin{theorem}
The partial word $h({\bf vtm})$ is a binary partial word with infinitely many holes that avoids squares of order $4$ or more.
\label{thm:longsq}
\end{theorem}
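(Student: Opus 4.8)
The plan is to verify this theorem computationally using Walnut, exploiting the fact that ${\bf vtm}$ is automatic and $h$ is a uniform morphism, so $h({\bf vtm})$ is itself an automatic sequence over the alphabet $\{0,1,\diamond\}$ (coded as $\{0,1,2\}$ in Walnut). First I would note that ${\bf vtm}$ is generated by a $3$-uniform morphism, hence is $3$-automatic, and since $h$ has constant image length $4$ except on $h(1)$\dots wait, here the obstacle: $h$ is \emph{not} uniform since $|h(1)| = 3$ while $|h(0)| = |h(2)| = 4$. So the first real step is to massage the construction into something Walnut can ingest: either find an equivalent uniform morphic presentation of $h({\bf vtm})$, or build the automaton for $h({\bf vtm})$ directly. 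In practice one would use Walnut's image-of-morphism machinery (the \texttt{image} command), feeding it the automaton for ${\bf vtm}$ and the definition of $h$, to obtain a DFAO computing the sequence ${\bf w} = h({\bf vtm})$; the conference/Walnut framework does handle nonuniform morphisms by passing to an appropriate numeration system or by a standard padding trick.

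Once ${\bf w}$ is available as a Walnut word automaton, the proof reduces to two assertions. The first assertion, that ${\bf w}$ has infinitely many holes, is checked by the predicate ``for all $m$ there exists $n > m$ with ${\bf w}[n] = \diamond$'', i.e.\ in Walnut something like \texttt{Am En (n>m) \& W[n]=@2}, which Walnut evaluates to \texttt{true}. (Alternatively, since every block $h(1) = 011\diamond$ contributes a hole and $1$ occurs infinitely often in ${\bf vtm}$, one could argue this by hand, but the automatic check is cleaner.) The second and main assertion is that ${\bf w}$ avoids squares of order $\geq 4$. Using the first-order definition of a square in a partial word given in Section~\ref{sec:defs}, I would write the predicate: there exist $j$ and $n \geq 4$ such that for all $i < n$, $\bigl({\bf w}[j+i] = {\bf w}[j+n+i]\bigr) \vee \bigl({\bf w}[j+i] = \diamond\bigr) \vee \bigl({\bf w}[j+n+i] = \diamond\bigr)$, and ask Walnut to confirm that this predicate is \emph{false}. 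Since ${\bf w}$ is automatic, this quantified statement is decidable and Walnut returns \texttt{false}, establishing the claim.

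The step I expect to be the main obstacle is the first one: getting $h({\bf vtm})$ into Walnut correctly, because of the non-uniformity of $h$. The clean resolution is to observe that one only cares about the sequence ${\bf w}$ as an automatic sequence in \emph{some} numeration system; Walnut's morphism/image tooling produces the DFAO directly from ${\bf vtm}$ and $h$ (internally handling the variable block lengths), and from that point on everything is a routine decidable query. A secondary subtlety is making sure the logical encoding of ``square of order $\geq 4$'' in Walnut faithfully matches the partial-word definition — in particular that the hole character is coded consistently (as \texttt{@2} here, since the alphabet is $\{0,1,\diamond\}$) and that the bound $n \geq 4$ (equivalently $n > 3$) is imposed. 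With those encodings fixed, the two Walnut runs — one returning \texttt{true} for infinitely many holes, one returning \texttt{false} for the existence of a long square — together constitute the proof.
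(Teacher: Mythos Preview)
Your approach is correct and is essentially the same as the paper's: define the image $h({\bf vtm})$ in Walnut via the \texttt{image} command and then evaluate the first-order predicate for a partial-word square of order $n>3$. The paper does exactly this (and leaves the ``infinitely many holes'' claim implicit, since ${\bf vtm}$ contains infinitely many $1$'s and $h(1)$ contributes a hole).

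However, your ``main obstacle'' is a phantom arising from a misreading. The morphism $h$ \emph{is} $4$-uniform: $h(1)=011\diamond$ has length $4$, not $3$. So there is no non-uniformity to work around. (Incidentally, your side remark that ${\bf vtm}$ is generated by a $3$-uniform morphism is also off: the defining morphism $0\mapsto 012$, $1\mapsto 02$, $2\mapsto 1$ is non-uniform, but ${\bf vtm}$ is nonetheless $2$-automatic and is available in Walnut as \texttt{VTM}, which is what the paper uses.) Once you correct the length of $h(1)$, the construction is a straightforward $4$-uniform image of a $2$-automatic sequence, and the Walnut code is immediate.
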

\begin{proof}
The word ${\bf vtm}$ is in Walnut as \texttt{VTM}.
So, all we need to do is define the morphism $h$ and apply it to ${\bf vtm}$ and check for squares of order $4$ or more.
In the code below \texttt{Wh} denotes the image of this morphism.
Recall, Walnut only uses $0,1,\dots,$ as letters.
So, the image of $h$ is a binary partial word represented over $\{0,1,2\}$ where $2$ plays the role of $\diamond$.
Running the following in Walnut
\begin{verbatim}
morphism h "0->1100, 1->0112, 2->1010";
image Wh h VTM;
eval no_sq "?msd_2 ~ Ej En Ai (n>3) & ((i<n)=>((Wh[j+i]=Wh[j+n+i]) 
        | Wh[i+j]=@2 | Wh[i+n+j]=@2))";
\end{verbatim}
returns ``TRUE'' which proves the result.
\end{proof}

To our knowledge the construction in Theorem~\ref{thm:longsq} is new in the context of partial words, but the result on avoiding long squares is not optimal.
In~\cite[Theorem 4]{largeSq} an infinite binary partial word with infinitely many holes is constructed so that the only squares compatible with factors of it are $0^2$, $1^2$, $(01)^2$, and $(11)^2$.
This is based on a construction for full words from~\cite{RWS}.
We can reproduce this optimal construction after defining the following two morphisms.

\begin{align*}
0 &\stackrel{\psi}{\to} 012321012340121012321234 & 0 &\stackrel{\phi}{\to} \diamond 11100\\
1 &\stackrel{\psi}{\to} 012101234323401234321234 & 1 &\stackrel{\phi}{\to} 101100 \\
2 &\stackrel{\psi}{\to} 012101232123401232101234 & 2 &\stackrel{\phi}{\to} 111000 \\
3 &\stackrel{\psi}{\to} 012321234323401232101234 & 3 &\stackrel{\phi}{\to} 110010\\
4 &\stackrel{\psi}{\to} 012321234012101234321234 & 4 &\stackrel{\phi}{\to} 110001
\end{align*}

\begin{theorem}[{\cite[Theorem 4]{largeSq}}]
The only full squares compatible with the factors of the binary partial word  $\phi(\psi^{\omega}(0))$  are $00$, $11$, $0101$, and $1111$.
\label{thm:400G}
\end{theorem}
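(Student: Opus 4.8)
The plan is to follow the same Walnut-based strategy used in the proof of Theorem~\ref{thm:longsq}, since the statement is precisely the kind of assertion about an automatic sequence that Walnut is designed to settle. First I would encode the two morphisms $\psi$ and $\phi$ in Walnut, using the letter $5$ (or whatever $N+1$ turns out to be for the binary image alphabet) to play the role of the hole $\diamond$ in $\phi$, exactly as $2$ plays that role in the earlier proof. I would compute the fixed point $\psi^{\omega}(0)$ with the \texttt{morphism} and (iterated) \texttt{image} machinery, apply $\phi$ to it, and call the resulting binary partial word something like \texttt{WP}. The point of the intermediate alphabet $\{0,1,2,3,4\}$ is that $\psi$ is a uniform morphism (each image has length $24$) with a fixed point, so $\psi^{\omega}(0)$ is $24$-automatic, and then $\phi$ being uniform of length $6$ keeps the composite automatic; Walnut handles this internally once the morphisms are declared.

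Next I would translate the statement ``the only full squares compatible with factors of \texttt{WP} are $00$, $11$, $0101$, $1111$'' into first-order logic. A factor of \texttt{WP} starting at position $j$ of length $2n$ is compatible with a full square $xx$ exactly when there is a full word $x$ of length $n$ with $x[i]$ matching both $\texttt{WP}[j+i]$ and $\texttt{WP}[j+n+i]$ whenever those are not holes — equivalently, using the ``square in a partial word'' predicate from Section~\ref{sec:defs}, the factor is itself a partial-word square of order $n$ \emph{and} the two halves do not force a conflict. For binary words this is just the square predicate already written down: $\texttt{WP}[j+i]=\texttt{WP}[j+n+i] \vee \texttt{WP}[j+i]=\diamond \vee \texttt{WP}[j+n+i]=\diamond$ for all $i<n$. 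So I need: for every $j$ and every $n>0$, if that predicate holds then the square $xx$ it is compatible with is one of the four listed ones. Since a compatible full square of order $1$ is determined up to its single letter and of order $2$ up to two letters, and $0^2,1^2$ exhaust order $1$ while the forbidden order-$2$ squares are $00\,00$ (i.e.\ $(00)^2$) and $10\,10$ and $(11)^2=1111$ and $(01)^2=0101$ — wait, one must be careful: the four allowed are $00$, $11$, $0101$, $1111$, so among order-$2$ squares $(00)^2=0000$ and $(10)^2=1010$ are the ones that must \emph{not} be compatible, and among order $\ge 3$ nothing is compatible. Concretely I would assert in Walnut: (a) no factor is compatible with a square of order $\ge 3$; (b) no factor is compatible with $(00)^2$ or with $(10)^2$ (equivalently $(01)^2$ shifted) as order-$2$ squares other than those already covered; and show each of these evaluates to \texttt{TRUE} (or rather, the negations of the bad events evaluate to \texttt{TRUE}). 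The cleanest formulation is probably: ``there is no $j,n$ with $n\ge 3$ and the square predicate holds at $(j,n)$'' returns \texttt{TRUE}, and ``there is no $j$ with the square predicate at $(j,2)$ together with $\texttt{WP}[j+i]\ne @1$ whenever $\texttt{WP}[j+i]$ determines the letter'' — handled by a couple of short \texttt{eval} commands enumerating which of the (finitely many) binary squares of order $\le 2$ can occur.

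The main obstacle I anticipate is not mathematical but computational and bookkeeping: writing the compatibility condition correctly. ``Compatible with the full square $xx$'' is subtly different from ``is a square as a partial word,'' because a partial-word square of order $n$ might be compatible with \emph{several} distinct full squares, and the theorem constrains the full squares, not merely the occurrence of a partial-word square. For orders $1$ and $2$ this is a finite check (list all binary squares of those orders, and for each ask whether some factor of \texttt{WP} is compatible with it), and for orders $\ge 3$ it suffices to show \texttt{WP} contains no partial-word square of order $\ge 3$ at all, because any compatible full square would in particular make the factor a partial-word square. So the decomposition is: one \texttt{eval} showing no partial-word square of order $\ge 3$, and a handful of \texttt{eval}s — one per binary square of order $\le 2$ — checking compatibility; the allowed four come back with occurrences (or we simply don't need to check those), the disallowed ones ($0000$, $1010$) come back \texttt{FALSE} for ``exists a compatible factor,'' i.e.\ \texttt{TRUE} for the negation. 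Since $\psi$ and $\phi$ are fixed explicit morphisms, once the automaton for \texttt{WP} is built all of these are routine for Walnut; the risk is purely in transcription errors in the $24$-letter images of $\psi$, which I would double-check against~\cite{largeSq,RWS}. I would then cite the successful Walnut run as the proof, in the same style as Theorem~\ref{thm:longsq}.
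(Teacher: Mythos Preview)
Your proposal is correct and matches the paper's own proof: verify with Walnut that there are no partial-word squares of order $\ge 3$, then handle the finitely many order-$1$ and order-$2$ cases individually. The one thing you do not anticipate, and which the paper flags explicitly, is that the order-$\ge 3$ check is computationally heavy---the intermediate automata require roughly 400G of memory---so the ``risk'' is not merely transcription but resource availability.
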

\begin{proof}
This theorem can be proven using Walnut to first verify\footnote{The computation takes approximately 400G of storage due to the size of automata arising in the calculation.} the partial word has no squares are order greater than $2$. 
The presence or absence of each possible square are order $1$ or $2$ can then be checked either by hand or with Walnut.
We suppress the Walnut code as the $24$-uniform morphism $\psi$ does not display well succinctly.
\end{proof}

We next move the considering both squares and antisquares.
Consider the morphism $f:\{0,1,\dots,7\}^* \to \{0,1,\dots,7\}^*$ given by
\begin{align*}
f(0) &= 01 & f(1) &= 23\\
f(2) &= 24 & f(3) &= 51\\
f(4) &= 06 & f(5) &= 01 \\
f(6) &= 74 & f(7) &= 24
\end{align*}
along with the coding $g:\{0,1,\dots,7\}^* \to \{0,1,\diamond\}$ by $g(m) = m \pmod 2$ for $m \neq 6$ and $g(6) = \diamond$.
Applying $g$ to the fixed point of $f$ will give us a word avoiding both squares and antisquares of large length.
Since $f(0) = 01$ we may iterate applying $f$ to $0$ to obtain the unique fixed point of the morphism $f$ we denote by $f^{\omega}(0)$.
We will make use of the notation $f^{\omega}$ to denote fixed points of morphisms elsewhere as well.

\begin{theorem}
The partial word $g(f^{\omega}(0))$ is a binary partial word with infinitely many holes that avoids squares of order $7$ or more and avoids antisquares of order $3$ or more.
\label{thm:anti}
\end{theorem}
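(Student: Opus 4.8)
The plan is to mirror the Walnut-based approach used in the proofs of Theorems~\ref{thm:longsq} and~\ref{thm:400G}: the word $g(f^\omega(0))$ is an automatic sequence, so its combinatorial properties are decidable, and we simply ask Walnut to decide them. Concretely, I would first encode the fixed point $f^\omega(0)$ in Walnut. Since $f$ is a $2$-uniform morphism on the eight-letter alphabet $\{0,1,\dots,7\}$ with $f(0)=01$, the word $f^\omega(0)$ is a $2$-automatic sequence, so there is a standard way to enter it (either by producing the morphism with \texttt{morphism} and \texttt{image} starting from the base word, or by constructing the underlying automaton). Then I would apply the coding $g$; in Walnut's alphabet convention $g$ sends $0,2,4\mapsto 0$, $1,3,5,7\mapsto 1$, and $6\mapsto 2$, where $2$ again plays the role of the hole $\diamond$. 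Call the resulting sequence \texttt{Wg}.

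Next I would write the two first-order predicates from Section~\ref{sec:defs}. For ``no square of order $\ge 7$'' I would negate the partial-word square predicate with the side condition $n>6$, i.e. something like
\begin{verbatim}
eval no_sq7 "?msd_2 ~ Ej En Ai (n>6) & ((i<n) => ((Wg[j+i]=Wg[j+n+i])
        | Wg[i+j]=@2 | Wg[i+n+j]=@2))";
\end{verbatim}
and for ``no antisquare of order $\ge 3$'' I would use the antisquare predicate with $n>2$, namely $\forall i\,(i<n)\implies\big(({\bf Wg}[j+i]\neq{\bf Wg}[j+n+i])\wedge({\bf Wg}[j+i]\neq\diamond)\wedge({\bf Wg}[j+n+i]\neq\diamond)\big)$, again negated and wrapped in \texttt{Ej En}. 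Both predicates returning \texttt{TRUE} gives the square and antisquare bounds. To justify the claim of infinitely many holes, I would observe that a hole is produced exactly at positions where $f^\omega(0)$ has the letter $6$; since $6$ appears in $f(4)$ and $4$ appears in $f(2)$ and $2$ appears in $f(1)$ which appears in $f(0)$, the letter $6$ occurs, and then because $f^\omega(0)$ is the fixed point of a primitive-enough morphism every letter that occurs occurs infinitely often — this too can be checked mechanically with a Walnut query asserting $\forall n\,\exists i\,(i>n)\wedge({\bf Wg}[i]=\diamond)$.

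The main obstacle is the same one flagged in Theorem~\ref{thm:400G}: the square predicate quantifies over an unbounded order $n$ and indexes \texttt{Wg} at $j+i$, $j+n+i$, and the disjunction over hole cases inflates the product automaton. The resulting automata may be large, so the computation could be memory-intensive and one might need to run it on a machine with substantial storage (as was done for Theorem~\ref{thm:400G}). A secondary, more mundane point is getting the Walnut encoding of $f^\omega(0)$ correct — one must be careful that the $2$-uniform morphism is entered so that Walnut reads indices in \texttt{msd\_2}, and that the coding $g$ is composed correctly. Neither of these is a mathematical difficulty: once the automaton for $g(f^\omega(0))$ is in hand, decidability of the two properties is immediate and the only question is whether the specific decision procedure terminates within available resources, which for these predicate sizes it should.
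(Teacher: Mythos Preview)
Your proposal is correct and matches the paper's proof essentially line-for-line: the paper uses \texttt{morphism}/\texttt{promote}/\texttt{image} to build \texttt{Wg} and then evaluates precisely the two predicates you wrote (with $n>6$ for squares and $n>2$ for antisquares), both returning \texttt{TRUE}. Your memory concern is unfounded here---unlike the 24-uniform case of Theorem~\ref{thm:400G}, this 2-uniform morphism on eight letters yields small automata and the computation is immediate---and the paper omits your extra ``infinitely many holes'' check, though adding it does no harm.
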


\begin{proof}
We establish the theorem by running the following in Walnut
\begin{verbatim}
morphism f "0->01, 1->23, 2->24, 3->51, 4->06, 5->01, 6->74, 7->24";
morphism g "0->0, 1->1, 2->0, 3->1, 4->0, 5->1, 6->2, 7->1";
promote Wf f;
image Wg g Wf;
eval no_sq "?msd_2 ~ Ej En Ai (n>6) & ((i<n)=>((Wg[j+i]=Wg[j+n+i]) 
        | Wg[i+j]=@2 | Wg[i+n+j]=@2))";

eval no_anti "?msd_2 ~ Ej En Ai (n>2) & ((i<n)=>((Wg[j+i]!=Wg[j+n+i]) 
        & Wg[i+j]!=@2 & Wg[i+n+j]!=@2))";
\end{verbatim}
which outputs ``TRUE'' twice.
\end{proof}

\begin{remark}
In was shown in~\cite[Theorem 9]{anti} that the full word obtained by coding the fixing point of $f$ with $m \mapsto m \pmod 2$ for all $m \in  \{0,1,\dots,7\}$ avoids both squares and antisquares of order $3$ or more.
Since adding holes cannot make any antisquares the fact about avoiding antisquares in Theorem~\ref{thm:anti} is immediate.
\end{remark}

\begin{table}
\centering
\begin{tabular}{|c| c c c c c c c c c c c c c  c c |}\hline
\backslashbox{$a$}{$b$} & 0 & 1 & 2 & 3 & 4 & 5 & 6 & 7 & 8 & 9 & 10 & 11 &12 & 13 & $\cdots$ \\ \hline
0&1&1&1&1&1&1&1&1&1&1&1&1&1&1& $\cdots$ \\ 
1&3&4&5&5&5&5&5&5&5&5&5&5&5&5 & $\cdots$ \\ 
2&5&7&9&10&11&11&12&12&14&14&15&16&16&16 & $\cdots$ \\
3&7&11&14&19&19&19&19&22&26&30&34&52&97&&\\ 
4&9&15&22&27&30&45&54&103&397&& & & & &\\ 
5&11&19&35&40&74& & &&&&&&&&\\ 
6&13&23&47&50&  &  &&&&&&&&&\\
7&15&27&59& &&&&&&&&&&&\\
8&17&31&147& &&&&&&&&&&& \\ 
9&19&35& &&&&&&&&&&&&\\
10&21&39& &&&&&&&&&&&& \\ 
$\vdots$ & $\vdots$ & $\vdots$& &&&&&&&&&&&&\\ \hline
\end{tabular}
\caption{The length of the longest binary partial word with a single hole that contains at most $a$ squares and at most $b$ antisquares.}
\label{tbl:anti}
\end{table}

In~\cite{anti} for any fixed $a$ and $b$, the problem of finding the longest binary full word which contains at most $a$ distinct squares and at most $b$ distinct antisquares was solved.
One can consider versions of the same problem for partial words.
For example, we can ask for the length of the longest binary partial word with a fixed number of holes that contains at most $a$ distinct squares and at most $b$ distinct antisquares.
We will focus on the case of partial words with a single hole.
We will count how many distinct squares are compatible with a partial word.
For example, $0\diamond 0$ contain only the one square $0^2$ which is compatible with both $\diamond 0$ and $0 \diamond$.
Counting distinct squares has received much attention for both words and partial words.
It is known that even the addition of a single hole can fundamentally change distinct squares~\cite{reu, BSMS, HHK, IPL}.
Unlike if we were simply avoiding squares, the length of this longest binary partial word can be shorter or longer than that of the corresponding full word.
This is since replacing a letter in a binary full word can possibly create a square while it also has the potential to remove an antisquare.

Consider the following example for $a=4$ and $b=5$.
The length $32$ partial word
\[\diamond0111010000011010000110000010000\]
contains only the squares $0^2$, $1^2$, $(00)^2$,  and $(10)^2$.
It also contains only the antisquares $01$, $10$, $0011$, $0110$, and $1100$.
For full words the optimal length is $31$ which was originally computed in~\cite[Figure 1]{anti}. One binary full word giving witness to this length is 
\[0111010000011010000110000010000\]
which is obtained from the partial word above by removing the hole.
Notice prepending $0$ to this full word creates the new antisquare $001110$ while prepending $1$ creates $1011101000$.
Prepending $\diamond$ allows us to avoid new antisquares.
Moreover, we can create even longer partial words with one hole and only $4$ distinct squares and $5$ distinct antisquares.
The optimal length is $45$.
The partial word
\[000010000011000010110000011\diamond 00101100000101110\]
has length $45$ and only contains the squares $0^2$, $1^2$, $(00)^2$, and $(01)^2$ as well as only the antisquares $01$, $10$, $0011$, $0110$, and $1100$.
In Table~\ref{tbl:anti} we list the optimal lengths for many values of $a$ and $b$.

\begin{theorem}
The values in Table~\ref{tbl:anti} are correct.
\end{theorem}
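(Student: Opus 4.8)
**The plan is to verify Table~\ref{tbl:anti} by an exhaustive finite search combined with a finiteness argument that bounds the length of strings we must inspect.** The key observation is that each entry records the length of a longest finite binary partial word with exactly one hole whose compatible squares number at most $a$ and whose antisquares number at most $b$; this is a statement about finitely many finite objects once we know the search can be cut off. So the first step is to establish that the set of such partial words is finite for each pair $(a,b)$ appearing in the table. This follows because a binary word (full or partial) that is long enough must contain many distinct short squares: any binary full word of length $\ge 4$ already contains a square, and more generally there is a monotone growth in the number of forced distinct squares as the length grows (the same phenomenon exploited in~\cite{anti} for full words). Inserting a single hole can only reduce the count of distinct compatible squares by a bounded amount, so there is an explicit length $L(a,b)$ beyond which no partial word with one hole can have $\le a$ squares; similarly antisquares are forced to proliferate, giving a second bound, and we take the minimum.

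**Second, I would carry out the search itself by a breadth-first (or iterative-deepening) enumeration of binary partial words with a single hole.** For a fixed $(a,b)$ one grows candidate strings letter by letter over the alphabet $\{0,1,\diamond\}$, subject to the constraint that at most one $\diamond$ is ever used; at each extension one updates the set of distinct squares compatible with the current word and the set of distinct antisquares, pruning any branch as soon as either count exceeds $a$ or $b$ respectively. Because compatibility of a square $xx$ with a window of the partial word is exactly the first-order condition spelled out in Section~\ref{sec:defs} (matching where defined, $\diamond$ a wildcard), this update is a routine scan of all factors of the current word; and the count of distinct squares/antisquares is monotone nondecreasing under extension, so pruning is sound. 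The deepest surviving string in each branch gives a lower bound on the table entry, and termination of the search at depth $L(a,b)$ certifies that no longer example exists, yielding the exact value. One also records a witness word for each entry, as is done explicitly in the text for $(a,b)=(4,5)$.

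**The main obstacle is the combinatorial blow-up of the search for the larger entries**, e.g.\ the value $397$ at $(a,b)=(4,8)$ or $147$ at $(a,b)=(8,2)$: a naive enumeration over $\{0,1,\diamond\}^{L}$ is hopeless at those lengths, so the real content is that the pruning by "at most $a$ distinct squares / at most $b$ distinct antisquares'' keeps the live frontier small. This is plausible precisely because demanding few distinct squares is an extremely restrictive condition — it is essentially the structure theory behind the RWS-type constructions of~\cite{RWS,largeSq} — so surviving prefixes are highly constrained and the search tree stays narrow even when it is deep. Concretely, I would implement the enumeration so that at each step we keep only prefixes whose square-set and antisquare-set are within budget, and argue (or simply observe computationally) that the number of such prefixes of any given length is bounded independently of the length for these $(a,b)$, which both makes the computation feasible and, together with the a priori cutoff $L(a,b)$, completes the proof.

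**Remark on the role of Walnut.** Unlike Theorems~\ref{thm:longsq}--\ref{thm:anti}, this statement is not naturally a Walnut computation, since it quantifies over all partial words of a given hole-count rather than asserting a property of one fixed automatic sequence; thus the proof is a direct exhaustive computation rather than an automata-theoretic decision procedure, and the correctness of the table rests on the finiteness bound $L(a,b)$ plus the completeness of the pruned search.
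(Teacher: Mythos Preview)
Your search-with-cutoff scheme handles the finitely many ``interior'' entries, and that is indeed what the paper does there. The genuine gap is that you have overlooked the ellipses: Table~\ref{tbl:anti} asserts values for \emph{every} $a$ in the first two columns (the $\vdots$) and for \emph{every} $b$ in the first three rows (the $\cdots$). These are infinitely many claims, and no finite enumeration certifies them. The paper deals with the rows by proving once and for all that any one-hole binary partial word with at most $0$, $1$, or $2$ distinct squares has length at most $1$, $5$, $16$, so each row stabilises; and it deals with the columns by an explicit structural analysis (words with no antisquare are, up to symmetry, $0^m\diamond 0^n$ or $0^m\diamond 1^n$, giving the closed form $2a+1$; a similar case analysis yields $4a-1$ for one antisquare). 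Your proposal contains no substitute for these arguments.

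A secondary issue: your finiteness bound $L(a,b)$ cannot come from the antisquare side. The assertion that ``antisquares are forced to proliferate'' is false---$0^n$ has no antisquares whatsoever---so the only uniform length bound is the one coming from the square count. This does not break the finiteness of each individual entry (the square bound alone suffices), but it means there is no cutoff $L(b)$ depending only on $b$, which is exactly why the columns are genuinely infinite sequences requiring proof rather than computation.
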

\begin{proof}
The first three rows and first two columns are each infinite sequences of finite values and must be proven.
Outside of these rows and columns there are only finitely many entries which can be computed but a backtracking approach by generating partial words with at most one hole and checking if they can be extended.
The first three rows follow from the fact that a binary partial word with one hole that contains at most $0$, $1$, or $2$ squares has length at most $1$, $5$, or $16$ respectively.
This can be seen by direct verification of this fact, then computing the values until we reach $1$, $5$ or $16$.
Note it is clear the entries weakly increase along row and column.

For the first column, up to complementation, we consider words of the form $0^m\diamond 0^n$ or $0^m\diamond 1^n$ which are the only binary partial words with a single hole that do not contain an antisquare.
Let $\ell = m+n+1$ be the length of such a word.
The number of squares such a word contains is
\[\left\lfloor \frac{m+1}{2} \right\rfloor + \left\lfloor \frac{n+1}{2} \right\rfloor\ = \left \lfloor \frac{\ell}{2} \right\rfloor\]
and so $\ell = 2a + 1$ is the longest length containing at most $a$ distinct squares and $0$ antisquares.

Now for the second column we consider partial words with a single hole and only $1$ antisquare.
Let us assume $a > 1$.
We will look at partial words which start with $0$ and contain only the antisquare $01$.
So, we have  $0^m 1^n \diamond 0^p 1^q$ with $m > 0$.
Note if $m > 1$, then $n \leq 1$ or else we have both the antisquares $01$ and $0011$.
Similarly if $p > 1$, then $q \leq 1$.
So, let us assume our partial word is $0^m 1 \diamond 0^p 1$.
This partial word has $\left\lfloor \frac{N}{2} \right\rfloor + 1$ distinct squares where $N = \max(m, p+1)$ unless $m = p+1$ and the whole partial word then gives an additional square.
The squares contained are $1^2$ and $0^{2k}$ for $2k < N$ along with possibly $(0^m1)^2$.
So, we may take $0^{2a-2} 1 \diamond 0^{2a-2}1$ which has length $4a - 1$ and contains $a$ squares and $1$ antisquare.
This gives us one such word realizing the maximum agrees with what is found in Table~\ref{tbl:anti}.
We also have the partial word $0^{2a-1} 1 \diamond 0^{2a-2}$ of length $4a-1$ which contains $a$ squares and $1$ antisquare.
It turns out that all other such partial words can be obtained from the two we have given by complement and reversal.
This can be checked by considering the remaining cases of the possible forms of the partial word in a similar manner.
\end{proof}

\begin{remark}
Table~\ref{tbl:anti} is incomplete, and we do not know if the missing entries are finite or infinite.
From~\cite{anti} it is known that for full binary words the entry corresponding to $a=5$ and $b=5$ is finite while the remaining missing entries are infinite.
\end{remark}

\subsection{Avoiding non-trivial squares and cubes with many holes}
In this subsection we demonstrate how  some known constructions~\cite{denseLATA} of partial words ``dense'' with holes avoiding powers can be obtained and verified through Walnut.
The \emph{hole sparsity} of a partial word is smallest $s$ such that every factor of length $s$ contains at least one hole.
A square of the form $a\diamond$ or $\diamond a$ for some letter $a$ is called a \emph{trivial square}.
Indeed every partial word of length at least $2$ which contains at least one hole as well as at least one letter will contain a trivial square.
Thus for avoidance purposes we must allow trivial squares and avoid non-trivial squares.
Since the presence of holes makes squares or cubes more likely, it is an interesting problem to find partial words with small hole sparsity (and hence many holes) which avoid non-trivial squares or cubes (or more generally higher powers).

Let us consider the morphism $\rho: \{0,1,2, 3\}^* \to \{0,1,2,3\}^*$ defined by
\begin{align*}
\rho(0) &= 03\\
\rho(1) &= 12\\
\rho(2) &= 01\\
\rho(3) &= 10
\end{align*}
which appears in~\cite[Exercise 33(c)]{auto}.
Next we consider the morphism $\sigma:  \{0,1,2, 3\}^* \to \{0,1,2,3,\diamond\}^*$ defined by
\begin{align*}
\sigma(0) &= 320\diamond\\
\sigma(1) &= 120\diamond\\
\sigma(2) &= 310\diamond\\
\sigma(3) &= 130\diamond
\end{align*}
whose image is a partial word over an alphabet with size $4$.
We are now ready to give on automated proof of the following which is in the proof of~\cite[Lemma 2]{denseLATA}.

\begin{theorem}
The partial word $\sigma(\rho^{\omega}(0))$ is a partial word with hole sparsity $4$ over an alphabet of size $4$ which avoids non-trivial squares.
\label{thm:dense_sq}
\end{theorem}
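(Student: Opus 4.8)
The plan is to follow the same Walnut-based template already used in the proofs of Theorems~\ref{thm:longsq} and~\ref{thm:anti}. First I would encode the two morphisms $\rho$ and $\sigma$ in Walnut, build the fixed point $\rho^{\omega}(0)$ by iterating $\rho$ on $0$ (legitimate since $\rho(0) = 03$ begins with $0$), and then form the partial word ${\bf w} = \sigma(\rho^{\omega}(0))$ as the image automaton, using letter $4$ to play the role of $\diamond$ in the Walnut alphabet $\{0,1,2,3,4\}$. Concretely this is something like \texttt{morphism rho "0->03, 1->12, 2->01, 3->10";} followed by \texttt{morphism sig "0->3204, 1->1204, 2->3104, 3->1304";}, then \texttt{promote Wrho rho;} \texttt{image Wsig sig Wrho;}.

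Second, I would verify the three asserted properties as three separate first-order queries. The alphabet size being $4$ is immediate from the construction (the letters appearing in the images of $\sigma$ are exactly $\{0,1,2,3,\diamond\}$, i.e.\ $\{0,1,2,3,4\}$ in Walnut), so nothing needs to be checked there beyond noting it. The hole sparsity being $4$ splits into two halves: (i) every factor of length $4$ contains a hole, which one checks by an \texttt{eval} asserting $\forall j\,\exists i\,(i<4)\wedge({\bf w}[j+i]=\diamond)$; and (ii) sparsity is not smaller than $4$, i.e.\ there is a hole-free factor of length $3$, which one checks with an \texttt{eval} asserting $\exists j\,\forall i\,(i<3)\implies({\bf w}[j+i]\neq\diamond)$. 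Since $\sigma$ is $4$-uniform with a single $\diamond$ in the last position of each block, both of these are essentially forced, but running them in Walnut makes the proof mechanical.

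Third, and this is the substantive check, I would verify that ${\bf w}$ avoids non-trivial squares. The subtlety is the word \emph{non-trivial}: trivial squares $a\diamond$ and $\diamond a$ are unavoidable and must be excluded from the search, so the query is not simply the negation of the partial-word square predicate from Section~\ref{sec:defs}. I would write the square predicate of order $n$ with $j$ the starting position — $\forall i\,(i<n)\implies(({\bf w}[j+i]={\bf w}[j+n+i])\vee({\bf w}[j+i]=\diamond)\vee({\bf w}[j+n+i]=\diamond))$ — and then assert there is no such square that is also non-trivial, where ``non-trivial'' is encoded as ``$n\geq 2$, or $n=1$ but neither ${\bf w}[j]$ nor ${\bf w}[j+1]$ is a hole.'' In Walnut roughly:
\begin{verbatim}
eval no_ntsq "?msd_2 ~Ej En Ai (n>=1) & ((n>=2)|(Wsig[j]!=@4 & Wsig[j+1]!=@4))
    & ((i<n)=>((Wsig[j+i]=Wsig[j+n+i])|Wsig[j+i]=@4|Wsig[j+n+i]=@4))";
\end{verbatim}
and expect it to return TRUE. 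The main obstacle I anticipate is purely computational rather than conceptual: as already flagged for Theorem~\ref{thm:400G}, the automata produced by applying a uniform morphism and then quantifying over square positions can blow up in size, so the risk is that the $n$-quantifier (unbounded) makes the intermediate automaton large; if so, one would need either more memory or a slightly cleverer phrasing (for instance bounding $n$ by the period structure, though for avoidance one genuinely needs all $n$). Assuming the computation terminates with TRUE on all three queries, the theorem follows, since the properties checked are exactly the definitions of hole sparsity $4$, alphabet size $4$, and avoidance of non-trivial squares.
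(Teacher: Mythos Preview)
Your proposal is correct and follows essentially the same Walnut-based approach as the paper: encode $\rho$ and $\sigma$, promote the fixed point, and verify square avoidance with a first-order query using letter $4$ for $\diamond$. The only minor differences are that the paper dispatches the $n=1$ case by a one-line inspection (no $aa$ or $\diamond\diamond$ occurs) and then runs Walnut only for $n>1$, and it does not bother to mechanically verify hole sparsity since that is immediate from $\sigma$ being $4$-uniform with a hole in each block; your version folds both of these into the Walnut queries, which is harmless.
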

\begin{proof}
It is easy to see that the squares $\diamond^2$, $0^2$, $1^2$, $2^2$, and $3^2$ do not occur in $\sigma(\rho^{\omega}(0))$.
Recall, since we avoiding non-trivial squares $a \diamond$ and $\diamond a$ are allowed to we present for $a \in \{0,1,2,3\}$.
Thus, we only need to worry about squares of order $n$ for $n>1$.
Running the following commands in Walnut
\begin{verbatim}
morphism rho "0->03, 1->12, 2->01, 3->10";
morphism sigma "0->3204, 1->1204, 2->3104, 3->1304";
promote Wrho rho;
image W sigma Wrho;
eval no_sq "?msd_2 ~ Ej En Ai (n>1) & ((i<n)=>((W[j+i]=W[j+n+i]) 
        | W[i+j]=@4 | W[i+n+j]=@4))";
\end{verbatim}
results in an output of ``TRUE'' and the theorem is proven.
\end{proof}

We let $\tau$ denote the morphism defined by $\tau(0) = 01\diamond$ and $\tau(1) = 02\diamond$. We can now give an automated proof of the following which was first proven in~\cite[Lemma 7]{denseLATA}.

\begin{theorem}
The partial word $\tau({\bf tm})$ is partial word with hole sparsity $3$ over an alphabet of size $3$ which avoids cubes.
\label{thm:dense_cube}
\end{theorem}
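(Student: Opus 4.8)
The plan is to mirror the Walnut-based strategy already used for Theorems~\ref{thm:longsq}, \ref{thm:anti}, and \ref{thm:dense_sq}: translate the claim into a first-order statement over an automatic sequence and let Walnut decide it. First I would note that ${\bf tm}$ is built into Walnut as \texttt{TM}, so I can obtain $\tau({\bf tm})$ by declaring the morphism \texttt{tau "0->012, 1->022"} (using $2$ to play the role of $\diamond$, as the paper does throughout for alphabet $\{0,1,\diamond\}$) and applying it via \texttt{image W tau TM}. Before running the cube test, I would promote \texttt{TM} if needed so the morphism image is well-formed; since \texttt{TM} is already an automatic sequence this is routine.

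The next step is to encode ``contains a cube'' using the partial-word cube predicate displayed in Section~\ref{sec:defs}. This is the one genuinely delicate point: that predicate has seven disjuncts in the consequent (covering the cases where zero, one, or two of the three aligned positions are holes), so the Walnut \texttt{eval} command will be correspondingly long. I would write \texttt{eval no\_cube} as the negation of $\exists j\,\exists(n>0)\,\forall i,\ (i<n)\implies(\cdots)$, where the body is the literal transcription of the seven-clause disjunction with \texttt{W[j+i]}, \texttt{W[j+n+i]}, \texttt{W[j+2n+i]} in place of ${\bf w}[j+i]$, etc., and \texttt{=@2} testing for the hole. One must be careful to get every clause right, since an omitted clause would make the test too strict (reporting a spurious cube) rather than too lax.

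With the predicate written, running it in Walnut should return ``TRUE,'' which establishes that $\tau({\bf tm})$ avoids cubes. The remaining two assertions in the statement are essentially definitional and can be checked separately or by inspection: the alphabet of the image is $\{0,1,\diamond\}$, so size $3$ is immediate from the form of $\tau$; and hole sparsity $3$ follows because every block $\tau(a)$ is a length-$3$ word ending in $\diamond$, so holes occur exactly at positions $\equiv 2 \pmod 3$, meaning every factor of length $3$ contains exactly one hole while the factor $01$ (a factor of $\tau(0)$, straddling no hole) of length $2$ shows sparsity is not smaller than $3$. One could also verify ``every length-$3$ factor contains a hole'' directly in Walnut with a short predicate \texttt{Ei (i>=j \& i<j+3 \& W[i]=@2)}, quantified over $j$, for uniformity with the rest of the paper.

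The main obstacle I anticipate is purely clerical rather than mathematical: faithfully transcribing the seven-disjunct cube predicate into Walnut syntax without error, and making sure the morphism-plus-coding pipeline is set up so that \texttt{W} really equals $\tau({\bf tm})$ with $2$ coding the hole. Once that is in place, the decidability of first-order statements over automatic sequences guarantees that Walnut's ``TRUE'' output is a complete proof; there is no case analysis or combinatorial lemma left for the human to supply, exactly as in the preceding theorems.
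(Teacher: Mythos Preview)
Your overall strategy matches the paper's exactly: define the morphism, apply it to the built-in Thue--Morse word, and feed the seven-clause cube predicate to Walnut. However, you have misread the target alphabet, and this breaks the encoding. The morphism is $\tau(0)=01\diamond$, $\tau(1)=02\diamond$, so the image uses the three \emph{letters} $0,1,2$ in addition to the hole; that is why the theorem says ``alphabet of size $3$.'' Your declaration \texttt{tau "0->012, 1->022"} with $2$ standing for $\diamond$ therefore encodes the wrong word (it sends $1\mapsto 0\diamond\diamond$), and your hole test \texttt{=@2} would flag the genuine letter $2$ as a hole. The paper instead uses $3$ for $\diamond$, declaring \texttt{tau "0->013, 1->023"} and testing holes with \texttt{=@3}. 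With your encoding Walnut would in fact return \textsc{false}, since $0\diamond\diamond$ is already a cube of order~$1$.

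Two smaller points: the Thue--Morse word is stored in Walnut as \texttt{T}, not \texttt{TM}; and your justification that ``the alphabet of the image is $\{0,1,\diamond\}$, so size $3$ is immediate'' conflates the hole with an alphabet letter. Once the encoding is fixed to use $3$ for the hole, the rest of your plan---including the optional direct check of hole sparsity---goes through exactly as in the paper.
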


\begin{proof}
The word ${\bf tm}$ is contained in Walnut at \texttt{T}.
We run the following in Walnut
\begin{verbatim}
morphism tau "0->013, 1->023";
image W tau T;
eval no_cube "?msd_2 ~Ej En Ai (n>1)&((i<n)=>(
     W[j+i]=W[j+n+i] & W[j+n+i]=W[j+2*n+i])
     |(W[j+i]=@3 & W[j+n+i]=W[j+2*n+i])|(W[j+n+i]=@3 & W[j+i]=W[j+2*n+i]) 
     |(W[j+2*n+i]=@3 & W[j+i]=W[j+n+i])|(W[j+i]=@3 & W[j+n+i]=@3) 
     |(W[j+i]=@3 & W[j+2*n+i]=@3) | (W[j+n+i]=@3 & W[j+2*n+i]=@3))";
\end{verbatim}
and it outputs ``TRUE'' proving the theorem.
\end{proof}

\begin{remark}
Both Theorem~\ref{thm:dense_sq} and Theorem~\ref{thm:dense_cube} are optimal in establishing smallest hole sparsity avoiding non-trivial squares and cubes respectively for a given alphabet size~\cite{JALC}.
For example, there does not exists an infinite partial word with hole sparsity $3$ over a $4$ letter alphabet which avoids non-trivial squares.
\end{remark}

\section{Some alternative notions}
Many times there are multiple ways to consider a full word concept in terms of partial words.
Here we consider some alternatives to definitions we used earlier.
We first adjust the idea of an antisquare in a partial word.
After we look at a different way to view cubes in partial words with periodicity.

\subsection{Another take on antisquares}
Let us say for a partial word ${\bf w}$ contains a \emph{c-antisquare} whenever
\[\exists j  \exists (n > 0)  \forall i, (i < n) \implies \left(({\bf w}[j+i] \neq {\bf w}[j+n+i]) \vee ({\bf w}[j+i]=\diamond) \vee ({\bf w}[j+n+i]=\diamond)\right)\]
which exactly means there is a full word antisquare which is compatible.
Now the presence of a hole makes is easier for partial words to contain both squares as well as antisquares.

As before we can consider the problem of find the long partial word with a single hole that contains at most $a$ distinct squares and $b$ distinct antisquares.
Again we count distinct full word squares and antisquares compatible with factors of our partial words.
For example, $0\diamond 0$ has both the antisquares $01$ and $10$ compatible with $0 \diamond$ and $\diamond 0$ respectively.
However, for $0 \diamond 1$ we only count a single c-antisquare $01$ compatible with both $0 \diamond$ and $\diamond 1$.
Note neither $0 \diamond 0$ nor $0 \diamond 1$ would have any antisquares as partial words in the setting of Section~\ref{sec:binary}.

\begin{table}
\centering
\begin{tabular}{|c| c c c c c c c c c c c c c  c c |}\hline
\backslashbox{$a$}{$b$} & 0 & 1 & 2 & 3 & 4 & 5 & 6 & 7 & 8 & 9 & 10 & 11 &12 & 13 & $\cdots$ \\ \hline
0&1&1&1&1&1&1&1&1&1&1&1&1&1&1& $\cdots$ \\ 
1&1&3&4&5&5&5&5&5&5&5&5&5&5&5& $\cdots$ \\ 
2&1&5&9&9&9&9&10&12&12&13&13&16&16&16& $\cdots$ \\
3&1&7&13&13&13&17&17&17&22&27&32&52&&\\ 
4&1&9&18&18&25&29&44&55&&& & & & &\\ 
5&1&11&24&24&37& & &&&&&&&&\\ 
6&1&13&30&36&  &  &&&&&&&&&\\
7&1&15&43& &&&&&&&&&&&\\
8&1&17&147& &&&&&&&&&&& \\ 
9&1&19& &&&&&&&&&&&&\\
10&1&21& &&&&&&&&&&&& \\ 
$\vdots$ & $\vdots$ & $\vdots$& &&&&&&&&&&&&\\ \hline
\end{tabular}
\caption{The length of the longest binary partial word with a single hole that contains at most $a$ squares and at most $b$ c-antisquares.}
\label{tbl:c-anti}
\end{table}

\begin{theorem}
The values in Table~\ref{tbl:c-anti} are correct.
\end{theorem}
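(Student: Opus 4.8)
The plan is to follow the same four-part strategy used for Table~\ref{tbl:anti}. First I would record the two structural facts that make the table well-posed: any factor of a valid partial word has no more distinct compatible squares and no more distinct compatible c-antisquares than the word itself, so for each fixed $(a,b)$ the set of achievable lengths is downward closed and the entries weakly increase along every row and column; and outside the first three rows and first two columns only finitely many entries appear, each of which I would obtain by a backtracking search that generates binary partial words with exactly one hole, prunes a branch as soon as it exceeds $a$ distinct compatible squares or $b$ distinct compatible c-antisquares, and returns the longest surviving word. Since those searches terminate, each reported value is the true maximum.

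For the first three rows I would invoke the bound that a binary partial word with a single hole and at most $0$, $1$, or $2$ distinct compatible squares has length at most $1$, $5$, or $16$; this is exactly the square bound already used for Table~\ref{tbl:anti} (it never refers to antisquares) and can be re-verified by a direct finite check. Then for fixed $a\in\{0,1,2\}$ I would exhibit a single-hole binary word of length $1$, $5$, or $16$ with at most $a$ distinct compatible squares, count its distinct compatible c-antisquares to obtain a threshold $b_0$ beyond which the entry is constant, and fill the finitely many entries with $b<b_0$ by the (terminating) backtracking search.

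For the first column I would note that a single-hole binary word with at least one letter contains a length-$2$ factor $0\diamond$, $\diamond 0$, $1\diamond$, or $\diamond 1$, each compatible with the full antisquare $01$ or $10$; hence $\diamond$ is the only single-hole binary word with no compatible c-antisquare, and every entry in the column $b=0$ equals $1$. For the second column I would first observe that an antisquare of order at least $2$ has some length-$2$ factor that is itself an antisquare, so a partial word with a compatible c-antisquare of order $\ge 2$ also has a compatible order-$1$ one; therefore a single-hole binary word with at most one compatible c-antisquare either has none (length $\le 1$) or its unique c-antisquare is $01$ or $10$, which up to complementation and reversal we may take to be $01$. Then $10$ is compatible with no factor, so both fillings of the hole avoid the factor $10$ and hence lie in $0^*1^*$, which forces the word to be $0^{c_1}\diamond 1^{c_2}$; ruling out the compatible order-$2$ antisquare $0011$ forces $c_1=0$ or $c_2=0$, so up to symmetry the word is $0^{c}\diamond$, whose distinct compatible squares are exactly $0^{2},0^{4},\dots,0^{2\lfloor(c+1)/2\rfloor}$. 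This gives length at most $2a+1$, attained by $0^{2a}\diamond$.

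The main obstacle is the second column: one must carefully enumerate the finitely many shapes of a single-hole binary word whose compatible c-antisquares collapse to a single order-$1$ antisquare and verify the square count for each, and one must also be sure that the backtracking searches behind the first three rows and the finitely many remaining entries are genuinely exhaustive and do terminate. The first column, and the reduction of the first three rows to the pure square bound, are routine by comparison.
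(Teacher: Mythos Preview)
Your overall plan matches the paper's proof almost exactly: the same decomposition into the first three rows (via the pure square bounds $1,5,16$), the first two columns, and a finite backtracking search for the remaining entries. The first-column argument and the reduction of rows $a\in\{0,1,2\}$ are identical in spirit to the paper.

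The one place where your route and the paper's diverge is the second column. The paper does not carry out a fresh case analysis; it simply observes that having exactly one c-antisquare forces the situation of the \emph{first} column of Table~\ref{tbl:anti} (no antisquares in the sense of Section~\ref{sec:binary}) and then quotes the optimal words $0^m\diamond 1^n$ and the value $2a+1$ from that earlier proof. Your argument is more self-contained: you show directly that the unique c-antisquare must be $01$ (up to symmetry), deduce the shape $0^{c_1}\diamond 1^{c_2}$, and then optimise. That is arguably cleaner than the paper's cross-reference, which is rather terse.

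There is, however, a small slip in your second-column analysis. From ``$0011$ is not a compatible c-antisquare'' you conclude $c_1=0$ or $c_2=0$, but in fact $0\diamond 1$ (the case $c_1=c_2=1$) also avoids $0011$, since it has no factor of length~$4$. So the correct dichotomy is $c_1=0$, $c_2=0$, or $c_1=c_2=1$. This does not change the final bound: $0\diamond 1$ has length~$3$ and two distinct compatible squares ($00$ and $11$), so it is never optimal (for $a\ge 2$ it is beaten by $0^{2a}\diamond$, and for $a\le 1$ it already has too many squares). You should add one sentence dispatching this stray case; otherwise the argument is complete and yields the same $2a+1$ as the paper.
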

\begin{proof}
For c-antisquares first three rows and first two columns are each infinite sequences of finite values and must be proven just as in the case of antisquares.
The other entries can be computed but a backtracking approach by extending partial words until they no longer meets the conditions imposed.
The first three rows follow from the fact that a binary partial word with one hole that contains at most $0$, $1$, or $2$ squares has length at most $1$, $5$, or $16$ respectively.
This can be seen by direct verification of this fact, then computing the values until we reach $1$, $5$ or $16$.
Note it is clear the entries weakly increase along row and column.

Each entry in the first column must be $1$ since either $a \diamond$ or $\diamond a$ is a c-antisquare for any letter $a$.
This leaves only the second column remaining.
Any partial word with a single hole of length great than $1$ contains at least one c-antisquare which is not an antisquare.
Also, any antisquare is also a c-antisquare.
So, for the second column we are looking for partial words with no antisquares and at most $a$ distinct squares.
This column turns out to exactly match the first column of Table~\ref{tbl:anti} and we can use the partial word $0^m \diamond 1^n$ as we did in the proof of Theorem~\ref{thm:anti}.
Here $01$ is the only antisquare compatible with any factor.
\end{proof}

\begin{remark}
In Table~\ref{tbl:anti} and Table~\ref{tbl:c-anti} we are counting squares in the same way.
Also, any time an antisquare is counted in the setting of Table~\ref{tbl:anti} a c-antisquare is also counted in the setting of Table~\ref{tbl:c-anti}.
So, it follows that values in Table~\ref{tbl:c-anti} are necessarily less than or equal to the corresponding entries in Table~\ref{tbl:anti}.
\end{remark}

\subsection{Weak overlaps}
A idea closely related to squares and cubes is an \emph{overlap} which in full words is a word of the form $axaxa$ for a word $x$ and letter $a$.
A \emph{weak overlap} is a partial word of the form $a_0 x_1 a_1 x_2 a_2$ such that $x_1 \uparrow x_2$, $a_0 \uparrow a_1$, and $a_1 \uparrow a_2$.
In the case that $|x_1| = 0 = |x_2|$ we additionally require that $a_0 \uparrow a_2$.
.Notice without the addition condition when $|x_1| = 0 = |x_2|$ it would be the case that $a \diamond b$ is a weak overlap for any letters $a$ and $b$.

It need not be the case the a weak overlap is contained in a full word overlap.
For example, ${\bf w} = 01 \diamond 11$ is a weak overlap.
This is an example of a \emph{weak period} where ${\bf w}[i] \uparrow {\bf w}[i+p]$ with $p=2$ in this case.
A \emph{strong period} is a partial word with ${\bf w}[i] \uparrow {\bf w}[j]$ whenever $i \equiv j \pmod{p}$.
Our example of a weak overlap does not have a strong period.
Looking back to our definition of cubes in Section~\ref{sec:defs} we see that cubes are a strong period.
Presently we will focus on weak periods by considering weak overlaps.
Since it is easier for weak periods to occur, it is thus more difficult to avoid them.

We define a morphism $\gamma \{0,1,2,3\}^* \to \{0,1,2,3\}^*$ by
\begin{align*}
\gamma(0) &= 03\\
\gamma(1) &= 02\\
\gamma(2) &= 21\\
\gamma(3) &= 20
\end{align*}
and also
\begin{align*}
\delta(0) = 01302\\
\delta(1) = 01234\\
\delta(2) = 43142\\
\delta(3) = 43210
\end{align*}
another morphism $\delta: \{0,1,2,3\}^* \to \{0,1,2,3,4\}^*$.
Lastly, for each $0 \leq i \leq 4$ define $\phi_i: \{0,1,2,3,4\}^* \to \{0,1,2,3,4,\diamond\}^*$ by $\phi_i(i) = \diamond$ and $\phi_i(j) = j$ when $j \neq i$.
The morphism $\phi_i$ simply substitutes $\diamond$ for $i$ and acts as the identity on the remaining letters.
The following theorem gives examples of weakly overlapfree partial words which are cases of partial words in~\cite[Theorem 5]{olap}.

\begin{theorem}
For any $0 \leq i \leq 4$ the partial word $\phi_i(\delta(\gamma^{\omega}(0)))$ is weakly overlapfree.
\end{theorem}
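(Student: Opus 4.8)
The plan is to follow the same Walnut-based template used throughout the paper. First I would encode the three morphisms $\gamma$, $\delta$, and $\phi_i$ directly in Walnut: $\gamma$ and $\delta$ are uniform-ish morphisms over small alphabets that translate verbatim, and each $\phi_i$ is a coding that sends the letter $i$ to the hole symbol (which, per the paper's convention, is the largest letter, here $5$) and fixes the other letters. I would then form the word $\mathtt{W}_i = \phi_i(\delta(\gamma^{\omega}(0)))$ by promoting the fixed point $\gamma^{\omega}(0)$, applying $\delta$ via \texttt{image}, and applying $\phi_i$ via \texttt{image}. Since there are only five values of $i$, one either writes five \texttt{eval} commands or a single command with a parameter; I would just run it five times.

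The main content is translating ``weakly overlapfree'' into first-order logic compatible with Walnut. A weak overlap is a factor $a_0 x_1 a_1 x_2 a_2$ with $|x_1|=|x_2|=p$, located at position $j$, meaning: $a_0$ at position $j$, $x_1$ occupying $j+1,\dots,j+p$, $a_1$ at $j+p+1$, $x_2$ occupying $j+p+2,\dots,j+2p+1$, $a_2$ at $j+2p+2$. The compatibility conditions $x_1\uparrow x_2$, $a_0\uparrow a_1$, $a_1\uparrow a_2$ each unfold into the usual ``equal or one side is a hole'' disjunction, exactly as in the square predicate in Theorem~\ref{thm:dense_sq}. So the predicate ``\,$\mathbf{w}$ contains a weak overlap\,'' becomes roughly
\[
\exists j\,\exists p\;\Big(\text{compat}(j,j+p+1)\;\wedge\;\text{compat}(j+p+1,j+2p+2)\;\wedge\;\forall i\,\big((i<p)\implies \text{compat}(j+1+i,\,j+p+2+i)\big)\Big)
\]
where $\text{compat}(s,t)$ abbreviates $(\mathbf{w}[s]=\mathbf{w}[t])\vee(\mathbf{w}[s]=\diamond)\vee(\mathbf{w}[t]=\diamond)$. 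The subtlety is the boundary case $p=0$: when $|x_1|=|x_2|=0$ the definition additionally demands $a_0\uparrow a_2$, and without it $a\diamond b$ would spuriously count. I would handle this by either restricting to $p\geq 1$ in one \texttt{eval} and separately checking the $p=0$ case (asking whether there exist three consecutive positions that are pairwise compatible), or by writing the $p=0$ clause explicitly as an extra conjunct $\text{compat}(j,j+2)$ guarded by $p=0$. Negating the whole predicate and asking Walnut to evaluate it should return ``TRUE'' for each $i$, which proves the theorem.

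The step I expect to be the main obstacle is not the logic but the computational feasibility: $\delta$ is a $5$-uniform morphism over a $5$-letter alphabet composed with $\gamma$, so the synchronized automaton for $\mathtt{W}_i$ may be moderately large, and the weak-overlap predicate quantifies over both a starting position $j$ and a period $p$ with a nested universal quantifier, which is the kind of nesting that can cause Walnut's intermediate automata to blow up. If the direct computation is too expensive, the fallback is to exploit that all five words $\phi_i(\delta(\gamma^{\omega}(0)))$ differ only by a relabeling/coding, and that weak-overlap-freeness of $\delta(\gamma^{\omega}(0))$ over the larger alphabet plus the structure of the coding forces the result; but I expect the straightforward run to succeed, as similar-sized computations (e.g.\ Theorem~\ref{thm:dense_sq}) go through. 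Finally, I would note that this recovers the relevant instances of~\cite[Theorem 5]{olap}, so the statement is consistent with the literature.
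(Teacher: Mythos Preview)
Your proposal is correct and follows essentially the same approach as the paper: encode $\gamma$ and $\delta$ in Walnut, form the automatic word, and check the negated weak-overlap predicate split into the $p\ge 1$ case and the $p=0$ (three consecutive pairwise-compatible letters) case, repeating for each $i$. The only cosmetic difference is that the paper does not apply $\phi_i$ explicitly as a coding; instead it works directly with $v=\delta(\gamma^{\omega}(0))$ and writes the compatibility clauses with \texttt{@i} in place of the hole symbol, which is logically equivalent to your plan and saves one \texttt{image} step.
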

\begin{proof}
For $ i = 0$ we run the following in Walnut
\begin{verbatim}
morphism g "0->03, 1->02, 2->21, 3->20";
promote w g;

morphism d "0->01302, 1->01234, 2->43142, 3->43210";
image v d w;

eval ov0 "~Ei En Aj (n > 1) & ((j <= n) => (v[i+j] = v[i+j+n] 
	| v[i+j]=@0 | v[i+j+n]=@0))";

eval t0 "~Ei (v[i]=@0 & v[i+1]=@0)|(v[i]=@0 & v[i+2]=@0)|(v[i]=@0 & v[i+1]=v[i+2]) 
	|(v[i+1]=@0 & v[i]=v[i+1])|(v[i+2]=@0 & v[i]=v[i+1])";
\end{verbatim}
which outputs ``TRUE'' twice.
The  statment \texttt{eval ov0} checks for weak overlaps $a_0x_1a_1x_2a_2$ with $|x_1|, |x_1| > 0$, and \texttt{eval t0} checks the case that $|x_1| = 0 = |x_2|$.
The cases with $i > 0$ are similar.
\end{proof}

\section{Conclusion}

We have initiated a study of partial words paired with the theorem prover Walnut.
Additionally, we have extended the definition of antisquares to partial words.
We believe both directions could be a source of new problems in combinatorics on words.
Furthermore, we have given alternative proofs of some results on partial words which provide machine verification.
We have discussed how the logical statements expressing a square are longer for partial words than full words.
So this adds some complexity to the Walnut calculations.
Additionally, in Walnut partial words work with an alphabet with an extra letter which represents the hole.
Let us close with an example comparing partial words with full words in Walnut.

Let us consider the morphisms $g$ and $h$ defined by
\begin{align*}
g(0) &= 1100 & h(0) &= 1100\\
g(1) &= 0111 & h(1) &= 011 \diamond \\
g(2) &= 1010 & h(2) &= 1010
\end{align*}
where $h$ was the morphism used in Theorem~\ref{thm:longsq}.
To get an idea of what happens going from full words to partial words we have the deterministic finite automata with output (DFAO) for $g({\bf vtm})$ and $h({\bf vtm})$ shown in Figure~\ref{fig:g} and Figure~\ref{fig:h} respectively.
Walnut works with automatic sequences using their DFAOs.
We find in this case only a modest increase in the size of the DFAO, and we were indeed able to use Walnut to automate a proof in the more complex but still tractable partial word situation.

\begin{figure}
\centering
\includegraphics[scale=0.42]{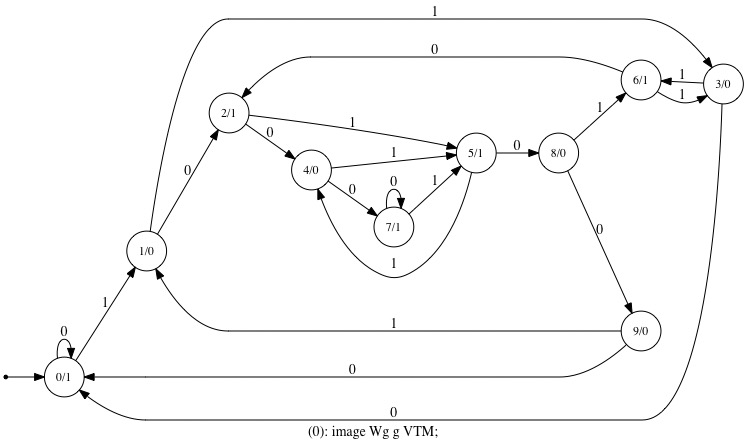}
\caption{The DFAO for $g({\bf vtm})$.}
\label{fig:g}
\end{figure}

\begin{figure}
\centering
\includegraphics[scale=0.42]{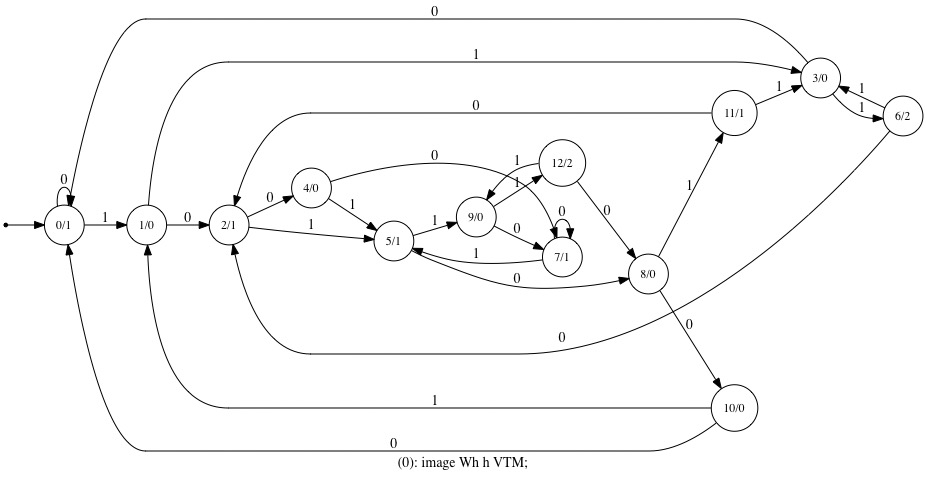}
\caption{The DFAO for $h({\bf vtm})$.}
\label{fig:h}
\end{figure}

To show there are no squares of length greater than $3$ in neither $g({\bf vtm})$ nor $h({\bf vtm})$ we may run
\begin{verbatim}
morphism h "0->1100, 1->0112, 2->1010";
image Wh h VTM;
eval no_sq "?msd_2 ~ Ej En Ai (n>3) & ((i<n)=>((Wh[j+i]=Wh[j+n+i]) 
        | Wh[i+j]=@2 | Wh[i+n+j]=@2))";

morphism g "0->1100, 1->0111, 2->1010";
image Wg g VTM;
eval no_sq_full "?msd_2 ~ Ej En Ai (n>3) & ((i<n)=>(Wg[j+i]=Wg[j+n+i]))";
\end{verbatim}
in Walnut.
In the log files produced we will find the following two lines
\begin{small}
\begin{verbatim}
(i<n=>((Wh[(j+i)]=Wh[((j+n)+i)]|Wh[(i+j)]=@2)|Wh[((i+n)+j)]=@2)):229 states - 24ms
\end{verbatim}
\end{small}
and
\begin{small}
\begin{verbatim}
(i<n=>Wg[(j+i)]=Wg[((j+n)+i)]):217 states - 10ms
\end{verbatim}
\end{small}
showing the sizes of the automata Walnut needs to determine the exists of a square in either $g({\bf vtm})$ or $h({\bf vtm})$ respectively.
We see the partial word version requires 229 states more states and takes to build 24ms compared 217 states and to 10ms for the full word version.
The issue one encounters in Walnut computations is typically in issue of space due to building some automaton.
This example, and the others we have given, suggest that partial word versions of theorems may take slightly more space in Walnut but may often tractable with Walnut when their full word counterparts are.

\bibliography{partial_ref.bib}

\end{document}